\documentclass[conference]{IEEEtran}
\IEEEoverridecommandlockouts
\addtolength{\topmargin}{+0.05in}

%

\usepackage{graphicx}
\usepackage{xcolor}

\usepackage{amsmath}
\usepackage{amssymb}
\usepackage{cite}
\usepackage{cuted}
\newcommand{\linebreakand}{%
  \end{@IEEEauthorhalign}
  \hfill\mbox{}\par
  \mbox{}\hfill\begin{@IEEEauthorhalign}
}    
\usepackage[colorlinks=Flase]{hyperref}
\usepackage{bbm}
\usepackage{algorithm} 
\usepackage{algpseudocode}
\usepackage{caption}
\usepackage{steinmetz}
\usepackage{subcaption}
\usepackage{amsthm}
\usepackage[english]{babel}

\newtheorem{proposition}{Proposition}
\newtheorem{lemma}{Lemma}
\makeatletter

\makeatother
\newcounter{relctr} 
\everydisplay\expandafter{\the\everydisplay\setcounter{relctr}{0}} 
\DeclareMathOperator*{\minimize}{minimize} 
\DeclareMathOperator*{\maximize}{maximize} 
\usepackage[figurename=Fig.]{caption}
\newcommand{\norm}[1]{\left\lVert#1\right\rVert}
\usepackage{xpatch}

\newtheoremstyle{remarkstyle}%
  {}
  {}
  {\itshape}
  {}
  {\itshape}
  {.}
  {.5em}
  {}

\theoremstyle{remarkstyle}

\newcommand\labelrel[2]{%
  \begingroup
    \refstepcounter{relctr}%
    \stackrel{\textnormal{(\alph{relctr})}}{\mathstrut{#1}}%
    \originallabel{#2}%
  \endgroup
}
\AtBeginDocument{\let\originallabel\label} 

%
\ifCLASSINFOpdf
\else
\fi

\hyphenation{op-tical net-works semi-conduc-tor}

\begin{document}

\title{Beyond Diagonal Reconfigurable Intelligent Surfaces for Multi-Carrier RF Wireless Power Transfer\\
\thanks{This work is partially supported by the Research Council of Finland (Grants 348515, 362782, and 346208 (6G Flagship)); by the European Commission through the Horizon Europe/JU SNS project Hexa-X-II (Grant no. 101095759); by the Finnish-American Research \& Innovation Accelerator. The work of M. Di Renzo was supported in part by the Nokia Foundation, the French Institute of Finland, and the French Embassy in Finland under the France-Nokia Chair of Excellence in ICT. He was also supported in part by the European Commission through the Horizon Europe project COVER under grant agreement number 101086228, the Horizon Europe project UNITE under grant agreement number 101129618, and the Horizon Europe project INSTINCT under grant agreement number 101139161, as well as by the Agence Nationale de la Recherche (ANR) through the France 2030 project ANR-PEPR Networks of the Future under grant agreement NF-PERSEUS 22-PEFT-004, and by the CHIST-ERA project PASSIONATE under grant agreements CHIST-ERA-22-WAI-04 and ANR-23-CHR4-0003-01. The work of B.~Clerckx is partially supported by UKRI grant EP/Y004086/1, EP/X040569/1, EP/Y037197/1, EP/X04047X/1, and EP/Y037243/1.}
}


\author{
{ Amirhossein~Azarbahram$^{*}$, Onel~L.~A.~L\'{o}pez$^{*}$, Bruno~Clerckx$^{\dagger}$, Marco~Di~Renzo$^{\circ}$, and Matti Latva-aho$^{*}$}
\vspace{2mm}
\\

\small	$^{*}$Centre for Wireless Communications (CWC), University of Oulu, Finland\\
\small	$^{\dagger}$Department of Electrical and Electronic Engineering, Imperial College London, U.K \\
\small $^{\circ}$ Universit\'e Paris-Saclay, CNRS,
CentraleSup\'elec, Laboratoire des Signaux et Syst\`emes, 3 Rue
Joliot-Curie, 91192 Gif-sur-Yvette, France \vspace{2mm}
\\
\small Emails: \{amirhossein.azarbahram, onel.alcarazlopez\}@oulu.fi, b.clerckx@imperial.ac.uk,\\ marco.di-renzo@universite-paris-saclay.fr, and matti.latva-aho@oulu.fi
}


\maketitle

\begin{abstract}

Radio frequency (RF) wireless power transfer (WPT) is promising for promoting sustainability in future wireless systems, but its low end-to-end power transfer efficiency is a critical challenge. For this, reconfigurable intelligent surfaces (RISs) can be leveraged to enhance efficiency by providing nearly passive beamforming gains. Beyond diagonal (BD) RIS is a new RIS variant offering greater performance benefits than traditional diagonal RIS (D-RIS), though its potential for RF-WPT remains unexplored. Motivated by this, we consider a single-input single-output BD-RIS-aided RF-WPT system and we formulate a joint beamforming and waveform optimization problem aiming to maximize the harvested power at the receiver. We propose an optimization framework relying on successive convex approximation, alternating optimization, and semi-definite relaxation. Numerical results show that increasing the number of transmit sub-carriers or RIS elements improves the harvested power. We verify by simulation that BD-RIS leads to the same performance as D-RIS under far-field line-of-sight conditions (in the absence of mutual coupling), while it outperforms D-RIS as the non-line-of-sight components become dominant.

\end{abstract}

\begin{IEEEkeywords}
Radio frequency wireless power transfer (RF-WPT), beyond diagonal reconfigurable intelligent surface (BD-RIS), waveform optimization, nearly passive beamforming.
\end{IEEEkeywords}

\IEEEpeerreviewmaketitle

\section{Introduction}

\IEEEPARstart{R}{adio} frequency (RF) wireless power transfer (WPT) is a promising technology for supporting uninterrupted communications among a myriad of low-power devices in future wireless systems. This can be done by providing multi-user wireless charging capability over large distances while using the wireless communications infrastructure. A key challenge of RF-WPT systems is their inherently low end-to-end power transfer efficiency, which is caused by the power consumption sources at the transmitter side, the channel losses, and the inefficiency at the energy receiver (ER) side. The efficiency of RF-WPT, which is referred to as WPT, can be improved by providing novel solutions for mitigating the adverse impact of these inefficiency sources \cite{lópez2023highpower, ZEDHEXA}.


Waveform optimization and energy beamforming (EB) techniques are promising enablers for efficient WPT. Specifically, optimizing multi-carrier waveforms can enhance efficiency at the ER by leveraging the non-linearity of RF-to-DC conversion \cite{clerckx2018beneficial}. Moreover, EB leads to effectively pointing the transmit signal toward the ER, compensating for wireless channel losses to some extent. EB can be done actively or passively depending on the transmit architecture. An active EB leverages active antenna elements connected to dedicated RF chains, while a passive EB utilizes low-cost nearly passive elements. There exist hybrid architectures, which combine active and passive EB, leading to a reduced number of active elements and a tradeoff between cost/complexity and performance \cite{hybridbeamsurvey}. 


One of the novel enablers of nearly passive EB is reconfigurable intelligent surfaces (RIS), which can improve the performance of wireless systems by providing extra coverage, especially when in the presence of blockages/obstacles. This emerging technology may rely on planar surfaces comprising nearly passive scattering elements, which can introduce amplitude and phase changes to incident electromagnetic waves. Reflective-type RIS can smartly tune the reflected signal and point it toward the desired direction, providing significant EB gains. A recent generalization of diagonal RIS (D-RIS) is given by beyond diagonal RIS (BD-RIS), which is characterized by scattering matrices not constrained to be diagonal, which translate into surfaces where elements/ports are interconnected via tunable impedances \cite{bdris_main_ref}. BD-RIS, through reconfigurable interconnections, provides new degrees of freedom and higher flexibility to manipulate waves by forming group-connected or fully-connected structures \cite{BD-RIS_scattering_clerckx}.

Waveform optimization and EB for WPT systems with non-linear EH have recently attracted considerable research interest, especially when using traditional fully digital transmit architectures, e.g., \cite{BFRFDCsingletone, azarbahram2024deepreinforcementlearningmultiuser, clreckxWFdesign}. Novel low-cost WPT transmit structures for reducing the implementation cost/complexity is another interesting topic,  gaining attention recently. The authors of \cite{azarbahram2024waveform} study the waveform and EB optimization for dynamic metasurface antennas-assisted WPT systems with non-linear EH. Interestingly, the harvested power maximization problem is addressed in \cite{clerckxRISWPT} and \cite{ris_swipt_nonlin} for RIS-aided WPT and simultaneous wireless information and power transfer systems with non-linear EH, respectively. Therein, the authors highlight the extra beamforming gains on the harvested DC power provided by RIS. However, these works only focus on conventional D-RIS, which is limited to a diagonal scattering matrix \cite{BD-RIS_scattering_clerckx}. Although BD-RIS architectures have been widely investigated for communication purposes \cite{BD-RIS_General, BD-RIS_scattering_clerckx, bdris_main_ref, direnzoclerckxmutual}, current research on RIS-aided WPT is still limited to conventional diagonal (local) RIS \cite{ris_swipt_nonlin, clerckxRISWPT}.

All in all, it has been shown that using practical non-linear EH models achieves higher DC power harvesting than linear models \cite{clerckx2018beneficial, clreckxWFdesign}. However, most studies on non-linear EH models address traditional fully digital WPT designs. This paper, in contrast, examines BD-RIS-assisted WPT systems with non-linear EH models, a topic not yet explored in the literature. Our main contributions are: i) we formulate a joint beamforming and waveform optimization problem for a single-input single-output (SISO) multi-carrier WPT system aided by a fully-connected BD-RIS to maximize the DC harvested power; ii) we propose alternating optimization to decouple the waveform optimization and beamforming problems and propose a successive convex approximation (SCA)-based method for waveform optimization, a semi-definite relaxation (SDR)-based method for beamforming, and a randomization-based method to map the obtained solution of SDR into a rank-1 solution; iii) we verify by simulations that in the absence of mutual coupling, BD-RIS achieves the same performance as D-RIS in terms of harvested DC power in far-field line-of-sight (LoS) channels even with multi-carrier signals, while in Rician channels with non-LoS (NLoS) components, BD-RIS provides additional performance gains compared to D-RIS. This extends the findings of \cite{BD-RIS_scattering_clerckx} to multi-carrier WPT.

\textbf{Structure:} Section~\ref{sec:system} covers the system model and problem formulation, while the beamforming and waveform optimization framework is elaborated in Section~\ref{sec:OPTIMIZATION}. Section~\ref{sec:numerical} presents the numerical results and Section~\ref{sec:conclude} concludes the paper.

\textbf{Notations:} Bold lower-case and upper-case letters represent vectors and matrices, respectively. The $\ell_2$-norm operator is denoted by $\norm{\cdot}$. $\Re\{\cdot\}$ and  $\Im\{\cdot\}$ denote the real and imaginary parts of the input. Moreover, $(\cdot)^T$, $(\cdot)^H$, and $(\cdot)^\star$ denote the transpose, transposed conjugate, and conjugate operations, respectively. Additionally, $[\cdot]_{i,l}$ denotes the element in the $l$th column and the $i$th row of a matrix. The vectorization operator is represented by $\mathrm{Vec}(\cdot)$, and its inverse is denoted by ${\mathrm{Vec}^{-1}_{D \times D}(\cdot)}$. Finally, $\mathbf{I}_D$ represents a $D \times D$ identity matrix and $\mathrm{diag}(\mathbf{a})$ refers to a diagonal matrix with its main diagonal being the elements of vector $\mathbf{a}$.

\section{System Model \& Problem Formulation}\label{sec:system}

We consider a multi-carrier SISO WPT system with $N$ sub-carriers aided by a fully-connected BD-RIS with $M$ elements. For simplicity, we assume perfect CSI at the transmitter side as in \cite{clerckxRISWPT, bdris_main_ref, clreckxWFdesign}. Additionally, we assume no direct path exists between the transmitter and receiver, as typical in RIS applications aimed at enhancing coverage in blind spot areas.


\subsection{Transmit and Received Signals}

Multi-carrier waveforms can leverage the rectifier's non-linearity and enhance the performance in terms of DC harvested power \cite{clerckx2018beneficial}. Motivated by this, we consider multiple sub-carriers with $f_n = f_c + (n - 1)\Delta f$ being the frequency of the $n$th sub-carrier, where $f_c$ is the lowest frequency and $\Delta f$ is the sub-carrier spacing. Thus, the transmit signal at time $t$ can be written as $\Re\bigl\{ \sum_{n = 1}^{N} s_n e^{j2\pi f_n t} \bigr\}$, where $s_n$ is the complex weight of the $n$th sub-carrier. This implies unit gain linear signal amplification at the power amplifier.

The transmit signal propagates through the wireless channel. We denote by $\mathbf{h}_{I,n} \in \mathbb{C}^{M\times 1}$ the incident channel between the transmitter and RIS and by $\mathbf{h}_{R,n} \in \mathbb{C}^{M\times 1}$ the channel between the RIS and receiver for the $n$th sub-carrier.
    
The D-RIS is mathematically characterized by diagonal phase-shift matrices. Assuming a multiport network model for the RIS, each RIS element is modeled as a port connected to an independent tunable impedance \cite{univnerini, bdrissportdesign}. This leads to the so-called diagonal scattering matrix, while in BD-RIS the scattering matrix $\mathbf{\Theta}$ is not limited to being diagonal. In the case of reciprocal and lossless BD-RIS\footnote{For the sake of simplicity and since the goal is to investigate the potential gains of BD-RIS for WPT purposes, we consider the case study with no mutual coupling between different ports. The performance of BD-RIS in the presence of electromagnetic mutual coupling has been analyzed in \cite{direnzoclerckxmutual}.}, the scattering matrix is symmetric ($\mathbf{\Theta} = \mathbf{\Theta}^T)$ and unitary ($\mathbf{\Theta}^H\mathbf{\Theta} = \mathbf{I}_M$).

The cascade channel\footnote{The underlying general model and assumptions can be seen in \cite{bdrissportdesign}.} at the $n$th sub-carrier is denoted by $h_n = \mathbf{h}_{R,n}^T\mathbf{\Theta}{\mathbf{h}_{I,n}}$, and the received signal at time $t$ is given by
\begin{equation}\label{eq:rx_signal}
    y(t) = \sum_{n= 1}^{N} \Re\bigl\{ s_n h_n e^{j2\pi f_n t}\bigr\}.
\end{equation}

\subsection{Rectenna}

The ER model, i.e., rectenna, consists of the antenna equivalent circuit and a single-diode rectifier for transforming the RF received signal into harvested DC \cite{azarbahram2024waveform, clerckx2018beneficial, clreckxWFdesign}. By leveraging this model, assuming perfect matching, and using the Taylor expansion, the output current of the rectenna is given by 
\begin{equation}\label{eq:dccurrent}
    i_{dc} =  \sum_{i\ even, i\geq2}^{\bar{n}} K_i \mathbb{E}\bigl\{y(t) ^i\bigr\},
\end{equation}
where $K_2 = 0.17$, $K_4 = 957.25$ \cite{clerckxRISWPT}, and one can consider $\bar{n} = 4$, hence modeling the main sources of non-linearity as a part of the fourth order term, while the second order term represents the ideal linear rectenna model \cite{clreckxWFdesign, clerckxRISWPT}. By expressing \eqref{eq:dccurrent} in the frequency domain and assuming $\bar{n} = 4$, we obtain \cite{clreckxWFdesign}
\begin{align}\label{eq:parseval}
    i_{dc} &= \frac{K_2}{2} \sum_{n}  \norm{s_n h_{n}}^2 + \ldots \nonumber \\ &\frac{3K_4}{8} \sum_{\substack{n_0, n_1, n_2, n_3 \\ n_0 + n_1 = n_2 + n_3}} {({h}_{n_0} {s}_{n_0})}^\star ({h}_{n_1} {s}_{n_1})^\star {({h}_{n_2} {s}_{n_2})} {({h}_{n_3} {s}_{n_3})},
\end{align}
which is more tractable than the sampling-dependent model.

\subsection{Problem Formulation}

The goal of this paper is to maximize the harvested power given a transmit power budget $P_T$. By leveraging the fact that the DC harvested power is an increasing function of the DC current at the ER, the problem can be formulated as
\begin{subequations}\label{graph_problem}
\begin{align}
\label{graph_problem_a} \maximize_{s_n, \mathbf{\Theta}} \quad &  i_{dc} \\
\textrm{subject to} \label{graph_problem_c}  \quad & \mathbf{\Theta} = \mathbf{\Theta}^T, \\
\label{graph_problem_e}  \quad & \boldsymbol{\Theta}^H \boldsymbol{\Theta} = \mathbf{I}_M, \\
\label{graph_problem_e}  \quad & \frac{1}{2} \sum_{n = 1}^{N} \norm{s_n}^2 \leq P_T.
\end{align}
\end{subequations}
Notably, \eqref{graph_problem} is a non-convex problem due to the coupling between the optimization variables and the presence of a quadratic equality constraint, i.e., \eqref{graph_problem_e}. To cope with this, we rely on alternating optimization and decouple the problem into separate problems for optimizing waveform and beamforming.

\section{Beamforming and Waveform Optimization}\label{sec:OPTIMIZATION}

Here, we provide optimization methods for solving the beamforming and waveform-related subproblems. 

\subsection{Waveform Optimization with Fixed $h_n$}

First, we assume the cascade channel $h_n$ is fixed and optimize the signal weights $s_n, \forall n$. We proceed by defining $s_n = \bar{s}_ne^{j\tilde{s}_n}$, where $\bar{s}_n$ and $\tilde{s}_n$ are the amplitude and the phase of $s_n$, respectively. Similarly, we can write $h_n = \bar{h}_ne^{j\tilde{h}_n}$. It is evident that the optimal $\tilde{s}_n$ must compensate for the phases in \eqref{eq:parseval}, leading to a real-valued $i_{dc}$ \cite{clreckxWFdesign}. Thus, it is sufficient to have $\tilde{s}_n^* = - \tilde{h}_n, \forall n$ leading to $\bar{s}_ne^{j\tilde{s}_n^*} h_n = \bar{s}_n\bar{h}_n, \forall n$. Now, the only goal of the optimization is to find the optimal amplitudes for the signal weights at different sub-carriers. For this, the problem for a given $\boldsymbol{\Theta}$ can be reformulated  as
\begin{subequations}\label{fixedPS}
\begin{align}
\label{fixedPS_a} \maximize_{\bar{s}_n} \quad &  {i}_{dc} \\
\textrm{subject to} \label{fixedPS_b}  \quad & \frac{1}{2} \sum_{n = 1}^{N} \bar{s}_n^2 \leq P_T.
\end{align}
\end{subequations}
\begin{lemma}\label{theorem:1}
The DC current in \eqref{eq:parseval} is convex w.r.t. $\bar{s}_n$.
\end{lemma}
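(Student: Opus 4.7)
The plan is to bypass the quartic frequency-domain sum in \eqref{eq:parseval} altogether and prove convexity from the time-domain representation, where the structure collapses to an average of powers of an affine signal. The payoff is that the standard rules of convex calculus (affine composition, non-negativity of coefficients, expectation) will do all the work, and no Hessian needs to be written down.

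First, I would invoke the phase choice $\tilde{s}_n^\star = -\tilde{h}_n$ established just before the lemma, so that $s_n h_n = \bar{s}_n \bar{h}_n$ is a real, non-negative scalar for every sub-carrier. Substituting this into \eqref{eq:rx_signal} gives
\begin{equation*}
y(t) \;=\; \sum_{n=1}^{N} \bar{s}_n\, \bar{h}_n \cos(2\pi f_n t),
\end{equation*}
which, viewed as a function of $\bar{\mathbf{s}} = [\bar{s}_1,\dots,\bar{s}_N]^T$ with the channel amplitudes $\bar{h}_n$ held fixed, is linear in $\bar{\mathbf{s}}$ for every fixed $t$. I would then recall from \eqref{eq:dccurrent} (with $\bar{n}=4$) that $i_{dc} = K_2\,\mathbb{E}\{y(t)^2\} + K_4\,\mathbb{E}\{y(t)^4\}$ and note that \eqref{eq:parseval} is merely the Parseval expansion of this identity, so the two representations are interchangeable.

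The remaining steps are short. Because $x \mapsto x^2$ and $x \mapsto x^4$ are convex on $\mathbb{R}$, and because composition of a convex function with an affine map preserves convexity, $y(t)^2$ and $y(t)^4$ are convex in $\bar{\mathbf{s}}$ for each fixed $t$. Convexity is preserved under non-negative weighted integration, i.e., under the expectation over $t$, so both $\mathbb{E}\{y(t)^2\}$ and $\mathbb{E}\{y(t)^4\}$ are convex in $\bar{\mathbf{s}}$. Finally, since $K_2, K_4 > 0$, $i_{dc}$ is a non-negative linear combination of convex functions and is therefore convex in $\bar{\mathbf{s}}$, which is exactly the claim.

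The main obstacle is really only the first conceptual move: recognising \eqref{eq:parseval} as the expected fourth power of an affine function of $\bar{\mathbf{s}}$. A direct attack on the frequency-domain sum would require handling the coupled index set $\{n_0+n_1=n_2+n_3\}$ explicitly and verifying positive semi-definiteness of the resulting quartic-in-$\bar{\mathbf{s}}$ Hessian, which is considerably more tedious and is precisely what the time-domain detour is designed to avoid.
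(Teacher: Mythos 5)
Your proposal is correct and follows essentially the same route as the paper: both work in the time domain, observe that $y(t)$ is affine in $\bar{s}_n$ after the phase alignment, note that the even powers $y(t)^2$ and $y(t)^4$ are convex compositions of convex functions with an affine map, and conclude via the linearity (non-negativity) of the time average and the positivity of $K_2, K_4$. Your write-up merely makes explicit the steps the paper compresses into ``the second-order convexity condition and the linearity of the mathematical average operator.''
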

\begin{proof}
    Note that ${y(t)}$ is linear w.r.t. $\bar{s}_n$. By leveraging the second-order convexity condition \cite{boyd2004convex} and the linearity of the mathematical average operator, it can be easily verified that $i_{dc}$ is convex w.r.t. $y(t)$. Then, $i_{dc}$ is convex w.r.t. $\bar{s}_n$ since the composition of an affine with a convex function is convex.
\end{proof}
According to Lemma~\ref{theorem:1}, problem \eqref{fixedPS} is not convex since it maximizes a convex function. However, the convexity of \eqref{fixedPS_a} leads to the fact that $\Tilde{i}_{dc}(\bar{s}_n , \bar{s}_n^{(l)}) \leq i_{dc}$, where $\Tilde{i}_{dc}(\bar{s}_n , \bar{s}_n^{(l)})$ is the first order Taylor expansion of $i_{dc}$ at the local point $\bar{s}_n^{(l)}$ formulated as
\begin{align}
    \Tilde{i}_{dc}(\bar{s}_n , \bar{s}_n^{(l)}) &= i_{dc}\big|_{\bar{s}_n = \bar{s}_n^{(l)}} + \sum_{n = 1}^{N} g(\bar{s}_n^{(l)})(\bar{s}_n - \bar{s}_n^{(l)}),
\end{align}
and
\begin{multline}\label{eq:derivativetaylor}
   g(\bar{s}_n) = K_2 \bar{h}_{n}^2{\bar{s}}_n + \frac{3K_4}{2}\biggl[{\bar{h}_n}^4{\bar{s}_n}^3 + 
    2 \sum_{n_1 \neq n} {\bar{h}_n}^2{\bar{h}_{n_1}}^2{\bar{s}_{n_1}}^2\bar{s}_n+ \\
    \sum_{\substack{n_2, n_3 \\ n_2 + n_3 = 2n \\ n_2 \neq n_3}} \bar{h}_{n_2}\bar{h}_{n_3}\bar{h}_{n}^2\bar{s}_{n_2}\bar{s}_{n_3}\bar{s}_{n} + \\
    \sum_{\substack{n_1, n_2, n_3 \\ -n_1 + n_2 + n_3 = n \\ n \neq n_1 \neq n_2 \neq n_3}} \bar{h}_{n_1}\bar{h}_{n_2}\bar{h}_{n_3}\bar{s}_{n_1}\bar{s}_{n_2}\bar{s}_{n_3} \bar{h}_{n}
    \biggr].
\end{multline}
Hereby and by removing the constant terms, the problem can be transformed into a convex problem at the neighborhood of the initial point $\bar{s}_n^{(l)}$, which can be formulated as 
\begin{subequations}\label{fixedPS_approx}
\begin{align}
\label{fixedPS_approx_a} \minimize_{\bar{s}_n} \quad & \xi_1 = -\sum_{n = 1}^{N}  g(\bar{s}_n^{(l)})\bar{s}_n \\
\textrm{subject to} \label{fixedPS_approx_b}  \quad & \frac{1}{2} \sum_{n = 1}^{N} {\bar{s}_n}^2 \leq P_T.
\end{align}
\end{subequations}
Finally, the problem can be iteratively solved using standard convex optimization tools, e.g., CVX \cite{cvxref}.

Algorithm~\ref{alg:waveformsca} illustrates the proposed SCA-based method for obtaining $s_n, \forall n$. First, the scaled match filter approach in \cite{brunolowcomp} is used to initialize the signal weights such that
\begin{equation}\label{eq:SMF}
    s_{n} = e^{-j{{\tilde{h}}_{n}}} {{\bar{h}_{n}}^{\beta}}\sqrt{\frac{2P_T}{\sum_{n_0 = 1}^{N}{\bar{h}_{n_0}}^{2\beta}}}, \quad \forall n.
\end{equation}
Then, the solution is updated iteratively until convergence. By leveraging the lower-bound properties of Taylor-approximation and writing the KKT conditions of \eqref{fixedPS}, it can be seen that \eqref{fixedPS_approx_a} is monotonically increasing and the obtained solution by Algorithm~\ref{alg:waveformsca} satisfies the KKT conditions of \eqref{fixedPS}.

\begin{algorithm}[t]
	\caption{SCA-based waveform optimization (SCA-WF).} \label{alg:waveformsca}
	\begin{algorithmic}[1]
            \State \textbf{Input:} $\{h_n\}_{\forall n}$, $\rho_s$, $\upsilon$ \quad \textbf{Output:} $s_n^{(l)}$
             \State \textbf{Initialize:} Initialize $s_n^{(l)}, \forall n$ using \eqref{eq:SMF}, $\xi_1 =\infty$
            \Repeat
                \State \hspace{-2mm} $\xi_1^\star \leftarrow \xi_1$
                \State \hspace{-2mm} Solve \eqref{fixedPS_approx} to obtain $\bar{s}_n^{(l + 1)}, \forall n$
                \State \hspace{-2mm} $s_n^{(l + 1)} \leftarrow \bar{s}_n^{(l + 1)}e^{-j\tilde{h}_n}, \forall n$ 
                \State \hspace{-2mm} Compute $\xi_1$ using \eqref{fixedPS_approx_a}, $s_n^{(l)} \leftarrow s_n^{(l + 1)}$, $l \leftarrow l + 1$
            \Until{$\norm{1 - {\xi_1^\star}/{\xi_1}}\leq \upsilon$}
                
\end{algorithmic} 
\end{algorithm}

\subsection{SDR-based Beamforming with Fixed $s_n$}\label{sec:SDROPTIMIZATION}

Herein, we provide an optimization method for the beamforming problem given $s_n, \forall n$, which adapts the approach and mathematical reformulations provided in \cite{clerckxRISWPT}. Let us proceed by rewriting the optimization problem for fixed $s_n$ as
\begin{subequations}\label{scatter_probelm}
\begin{align}
\label{scatter_probelm_a} \maximize_{\boldsymbol{\Theta}} \quad &  i_{dc} \\
\textrm{subject to} \label{scatter_probelm_b}  \quad & \boldsymbol{\Theta} = \boldsymbol{\Theta}^T, \\
\label{scatter_probelm_c}  \quad & \boldsymbol{\Theta}^H \boldsymbol{\Theta} = \mathbf{I}_M,
\end{align}
\end{subequations}
which is highly complicated and non-convex due to the unitary constraint and the $i_{dc}$ non-linearity.

\begin{proposition}\label{theorem:vecotorization2}
    The cascade channel, i.e., $h_n = \mathbf{h}_{R,n}^T\mathbf{\Theta}{\mathbf{h}_{I,n}}$, can be rewritten as 
    \begin{equation}\label{eq:hnreform2}
         h_n = \mathbf{a}_n^T \boldsymbol{\theta},
    \end{equation}
    where $\mathbf{a}_n =  \mathbf{P}^T\mathrm{Vec}(\mathbf{h}_{I,n}\mathbf{h}_{R,n}^T) \in \mathbb{C}^{M(M + 1)/2 \times 1}$. Moreover, $\boldsymbol{\theta} \in \mathbb{C}^{M(M + 1)/2 \times 1}$ is the vector containing the lower/upper-triangle elements in $\mathbf{\Theta}$ and $\mathbf{P} \in \{0, 1\}^{M^2\times M(M + 1)/2}$ is a permutation matrix such that
\begin{equation}\label{eq:permute_matrix} [\mathbf{P}]_{M(m - 1) + n, k} = 
    \begin{cases}
        1, & k = {m(m - 1)}/{2} + n,\  1 \leq n \leq m 
        \\
        1, & k = {n(n - 1)}/{2} + m,\  m < n \leq M \\
        0, & \text{otherwise}.
    \end{cases}
\end{equation} 
\end{proposition}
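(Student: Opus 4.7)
The plan is to bring $h_n = \mathbf{h}_{R,n}^T\mathbf{\Theta}\mathbf{h}_{I,n}$ into inner-product form using a standard vectorization identity, and then exploit the symmetry constraint $\mathbf{\Theta}=\mathbf{\Theta}^T$ (imposed by \eqref{graph_problem_c}) to compress the full $M^2$-vector $\mathrm{Vec}(\mathbf{\Theta})$ down to the $M(M+1)/2$ independent entries collected in $\boldsymbol{\theta}$. The explicit permutation matrix $\mathbf{P}$ in \eqref{eq:permute_matrix} is precisely the operator that performs this expansion, so the proof reduces to (i) a vectorization identity and (ii) verifying that $\mathrm{Vec}(\mathbf{\Theta})=\mathbf{P}\boldsymbol{\theta}$ entry by entry.

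First I would observe that $h_n$ is scalar, so $h_n = \mathrm{Tr}(\mathbf{h}_{R,n}^T\mathbf{\Theta}\mathbf{h}_{I,n})$, and by the cyclic property of the trace together with the identity $\mathrm{Tr}(AB)=\mathrm{Vec}(A^T)^T\mathrm{Vec}(B)$, I obtain
\begin{equation}
h_n = \mathrm{Tr}(\mathbf{\Theta}\mathbf{h}_{I,n}\mathbf{h}_{R,n}^T) = \mathrm{Vec}(\mathbf{\Theta}^T)^T\,\mathrm{Vec}(\mathbf{h}_{I,n}\mathbf{h}_{R,n}^T).
\end{equation}
Because $\mathbf{\Theta}=\mathbf{\Theta}^T$, we have $\mathrm{Vec}(\mathbf{\Theta}^T)=\mathrm{Vec}(\mathbf{\Theta})$, yielding
\begin{equation}
h_n = \mathrm{Vec}(\mathbf{h}_{I,n}\mathbf{h}_{R,n}^T)^T\,\mathrm{Vec}(\mathbf{\Theta}).
\end{equation}

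Next I would establish $\mathrm{Vec}(\mathbf{\Theta}) = \mathbf{P}\boldsymbol{\theta}$, where $\boldsymbol{\theta}$ collects the upper-triangular entries of $\mathbf{\Theta}$ read column by column, so that $[\mathbf{\Theta}]_{n,m}$ with $n\leq m$ sits at index $k = m(m-1)/2+n$ in $\boldsymbol{\theta}$. Under the column-major convention, the $(M(m-1)+n)$-th entry of $\mathrm{Vec}(\mathbf{\Theta})$ equals $[\mathbf{\Theta}]_{n,m}$. For $n\leq m$ this entry must be copied from $\boldsymbol{\theta}_{m(m-1)/2+n}$; for $n>m$ it equals $[\mathbf{\Theta}]_{m,n}$ by symmetry, which is stored at index $n(n-1)/2+m$ of $\boldsymbol{\theta}$. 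These two cases match exactly the two nonzero patterns of $\mathbf{P}$ in \eqref{eq:permute_matrix}, so $\mathrm{Vec}(\mathbf{\Theta})=\mathbf{P}\boldsymbol{\theta}$ holds. Substituting this back gives
\begin{equation}
h_n = \mathrm{Vec}(\mathbf{h}_{I,n}\mathbf{h}_{R,n}^T)^T\mathbf{P}\boldsymbol{\theta} = \bigl(\mathbf{P}^T\mathrm{Vec}(\mathbf{h}_{I,n}\mathbf{h}_{R,n}^T)\bigr)^T\boldsymbol{\theta} = \mathbf{a}_n^T\boldsymbol{\theta},
\end{equation}
which is the claim.

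The only non-routine step is the index bookkeeping in the second paragraph: one must be careful about the column-major ordering of $\mathrm{Vec}(\cdot)$ and, more importantly, about splitting the two cases $n\leq m$ and $n>m$ so that the symmetry of $\mathbf{\Theta}$ maps the redundant lower-triangular positions to the same entry of $\boldsymbol{\theta}$ as their upper-triangular counterparts. Once that combinatorial check is done, the algebraic manipulation via the trace/vectorization identity is immediate.
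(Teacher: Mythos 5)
Your proof is correct and follows essentially the same route as the paper: rewrite $h_n$ as a trace, apply the cyclic and trace--vectorization identities, and use $\mathrm{Vec}(\boldsymbol{\Theta})=\mathbf{P}\boldsymbol{\theta}$ to compress to the triangular entries. You are in fact slightly more careful than the paper, since you explicitly invoke the symmetry $\mathrm{Vec}(\boldsymbol{\Theta}^T)=\mathrm{Vec}(\boldsymbol{\Theta})$ and verify the index bookkeeping of $\mathbf{P}$ entry by entry, whereas the paper merely asserts that such a $\mathbf{P}$ can be designed.
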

\begin{proof}
We proceed by rewriting $h_n$ as 
\begin{align}
    h_n &= \mathbf{h}_{R,n}^T\boldsymbol{\Theta}{\mathbf{h}_{I,n}} 
    = \mathrm{Tr}(\mathbf{h}_{R,n}^T\boldsymbol{\Theta}{\mathbf{h}_{I,n}})  \nonumber \\
    &\labelrel={myinlab:1} \mathrm{Tr}({\mathbf{h}_{I,n}}\mathbf{h}_{R,n}^T\boldsymbol{\Theta})  = \mathrm{Tr}(\mathbf{H}_n\boldsymbol{\Theta}) \nonumber \\
    &\labelrel={myinlab:2} {\mathrm{Vec}(\mathbf{H}_n)}^T\mathrm{Vec}(\boldsymbol{\Theta}) 
    \labelrel={myinlab:3}{\mathrm{Vec}(\mathbf{H}_n)}^T\mathbf{P}\boldsymbol{\theta}
    = \mathbf{a}_n^T \boldsymbol{\theta}.
\end{align}
Assume that $\mathbf{D}$, $\mathbf{F}$, and $\mathbf{H}$ are arbitrary matrices. Hereby, \eqref{myinlab:1} comes from $\mathrm{Tr}(\mathbf{DFH}) = \mathrm{Tr}(\mathbf{HDF})$ and \eqref{myinlab:2} comes from $\mathrm{Tr}(\mathbf{D}^T\mathbf{F}) = \mathrm{Vec}(\mathbf{D})^T\mathrm{Vec}(\mathbf{F})$. Moreover, by defining $\boldsymbol{\theta} \in \mathbb{C}^{M(M + 1)/2 \times 1}$ as the vector containing the lower/upper-triangle elements in $\mathbf{\Theta}$, one can design a permutation matrix $\mathbf{P} \in \{0, 1\}^{M^2\times M(M + 1)/2}$ such that $\mathbf{P}\boldsymbol{\omega} = \mathrm{Vec}(\mathbf{\Omega})$ holds, which leads to \eqref{myinlab:3}.
\end{proof}

Leveraging Proposition~\ref{theorem:vecotorization2} allows to remove the constraint \eqref{scatter_probelm_b} by using $\boldsymbol{\theta}$ as the optimization variable. However, the complexity caused by the unitary constraint still remains. For this, we leverage the idea in \cite{clerckxRISWPT} and define $\mathbf{z}_n = s_n\mathbf{a}_n$, $\mathbf{D}_0 = \mathbf{z}_1\mathbf{z}_1^H + \ldots + \mathbf{z}_N\mathbf{z}_N^H$, $\mathbf{D}_1 = \mathbf{z}_1\mathbf{z}_2^H + \ldots + \mathbf{z}_{N-1}\mathbf{z}_N^H$, and $\mathbf{D}_{N - 1} = \mathbf{z}_1\mathbf{z}_N^H$. Hereby, $i_{dc}$ can be reformulated as 
\begin{multline}\label{eq:idc_reform1}
    i_{dc} = \frac{1}{2} K_2 \boldsymbol{\theta}^H \mathbf{D}_0 \boldsymbol{\theta} + \frac{3}{8} K_4 \boldsymbol{\theta}^H \mathbf{D}_0 \boldsymbol{\theta} (\boldsymbol{\theta}^H \mathbf{D}_0 \boldsymbol{\theta})^H  \\ + \frac{3}{4} K_4 \sum_{n = 1}^{N - 1} \boldsymbol{\theta}^H \mathbf{D}_n \boldsymbol{\theta} (\boldsymbol{\theta}^H \mathbf{D}_n \boldsymbol{\theta})^H.
\end{multline}
Next, we need to formulate the constraint \eqref{scatter_probelm_c} as a function of the new optimization variable $\boldsymbol{\theta}$. For this, let us proceed by defining a permutation matrix $\mathbf{P}_i$, 
which extracts the $i$th row of $\boldsymbol{\Theta}$ from $\boldsymbol{\theta}$. Hereby, \eqref{scatter_probelm_c} can be rewritten as 
\begin{align}\label{eq:permutetheta}
    (\mathbf{P}_i \boldsymbol{\theta})^H(\mathbf{P}_j \boldsymbol{\theta}) = \mathrm{Tr}(\boldsymbol{\theta}\boldsymbol{\theta}^H \bar{\mathbf{P}}_{i, j}) =  \begin{cases}
        1, & i = j, 
        \\
        0, & i \neq j,
    \end{cases}
\end{align}
where $\bar{\mathbf{P}}_{i, j} = \mathbf{P}_i^T \mathbf{P}_j$ and $\mathbf{P}_i$ is a permutation matrix containing the $(iM - M + 1)$th to the $iM$th row of $\mathbf{P}$.

Now, we define $d_n = \boldsymbol{\theta}^H \mathbf{D}_n \boldsymbol{\theta}$, $\mathbf{d} = [ d_1, \ldots, d_N]^T$, and positive semidefinite matrices $\mathbf{K}_0 = \mathrm{diag}\{\frac{3}{8} K_4, \frac{3}{4} K_4, \ldots, \frac{3}{4} K_4\} \succeq 0$ and $\mathbf{X} = \boldsymbol{\theta} \boldsymbol{\theta}^H$. Hereby, the problem can be reformulated as 
\begin{subequations}\label{SDP_problem_gen}
\begin{align}
\label{SDP_problem_gen_a} \maximize_{\mathbf{d}, \mathbf{X} \succeq 0} \quad &  \frac{1}{2} K_2 d_0 +  \mathbf{d}^H \mathbf{K}_0 \mathbf{d} \\
\textrm{subject to} \label{SDP_problem_gen_c}  \quad & \mathrm{Tr}(\mathbf{X} \bar{\mathbf{P}}_{i, j}) = 1,\ \forall i = j, \\
\label{SDP_problem_gen_f}  \quad & \mathrm{Tr}(\mathbf{X} \bar{\mathbf{P}}_{i, j}) = 0,\ \forall i \neq j, \\
\label{SDP_problem_gen_d}  \quad & d_n = \mathrm{Tr}(\mathbf{X} \mathbf{D}_n),\ \forall n, \\
\label{SDP_problem_gen_e}  \quad & \mathrm{rank}(\mathbf{X}) = 1.
\end{align}
\end{subequations}

Note that problem \eqref{SDP_problem_gen} is still non-convex and challenging to solve since it deals with the maximization of a convex objective function and includes a rank-1 constraint. For \eqref{SDP_problem_gen_a}, SCA can be used to iteratively update the objective function by approximating it using its first-order Taylor expansion. Specifically, the quadratic term in \eqref{SDP_problem_gen_a} can be approximated in the neighborhood of $\mathbf{d}^{(l)}$ by \cite{clerckxRISWPT}
\begin{equation}\label{eq:taylor_sdp}
    f(\mathbf{d}, \mathbf{d}^{(l)}) = 2 \Re \bigl\{ {\mathbf{d}^{(l)}}^H \mathbf{K}_0 \mathbf{d} \bigr\} - {\mathbf{d}^{(l)}}^H \mathbf{K}_0 {\mathbf{d}^{(l)}}. 
\end{equation}

Since $f(\mathbf{d}, \mathbf{d}^{(l)}) \leq \mathbf{d}^H \mathbf{K}_0 \mathbf{d}$ always holds, it can be used as a lower bound and maximizing the Taylor approximation iteratively leads to maximizing the original quadratic term. We use SDR to relax the rank-1 constraint and reformulate the problem in the neighborhood of the local point $\mathbf{d}^{(l)}$ as
\begin{subequations}\label{SDP_problem_rank}
\begin{align}
\label{SDP_problem_rank_a} \minimize_{\mathbf{X}} \quad & \Omega = \mathrm{Tr}(\mathbf{K}_1\mathbf{X}) \\
\textrm{subject to} \label{SDP_problem_rank_e}  \quad & \mathbf{X} \succeq 0, \\
\quad & \eqref{SDP_problem_gen_c}, \eqref{SDP_problem_gen_f}, \nonumber
\end{align}
\end{subequations}
where $\mathbf{K}_1 = \mathbf{J} + \mathbf{J}^H$ and
\begin{equation}
    \mathbf{J} = -\frac{K_2}{4} \mathbf{D}_0 - \frac{3K_4}{8} d_0^{(l)} \mathbf{D}_0 - \frac{3 K_4}{4} \sum_{n = 1}^{N - 1} {d_n^{(l)}} \mathbf{D}_n.
\end{equation}
This problem is a standard semi-definite programming (SDP) problem that can be solved using convex optimization tools. Moreover, if the obtained $\mathbf{X}^*$ is a rank-1 matrix, the SDR is tight and $\mathbf{X}^*$ is  a stationary point of problem \eqref{SDP_problem_gen}, leading to a local optimum point extracted by $\mathbf{X}^* = \boldsymbol{\theta}^* {\boldsymbol{\theta}^*}^H$. However, it might happen that $\mathrm{rank}(\mathbf{X}^*) > 1$, which leads to $\mathbf{X}^*$ satisfying the KKT conditions of problem \eqref{SDP_problem_rank} (see \cite{clerckxRISWPT} for the proof). For this, we obtain an approximate $\boldsymbol{\theta}^*$ using the Gaussian randomization method in \cite{guss_randomization}. In this scenario, constructing $\mathbf{\Theta}$ with $\boldsymbol{\theta}^*$ results in a symmetric $\mathbf{\Theta}$, though it may not necessarily be a unitary matrix. Therefore, it is crucial to project the final solution into the problem's feasible space. Let us proceed by writing $\mathbf{\Theta}' = \mathrm{Vec}^{-1}(\mathbf{P}\boldsymbol{\theta}^*)$. Then, by leveraging the fact that $\mathbf{\Theta}'$ is symmetric, we can write the singular value decomposition (SVD) as $\mathbf{\Theta}' = \mathbf{Q}\boldsymbol{\Sigma}\mathbf{Q}^T$, where $\mathbf{Q}$ is a unitary matrix and $\boldsymbol{\Sigma}$ is a diagonal matrix containing the singular values of $\mathbf{\Theta}'$. It is evident that if the diagonal elements of $\boldsymbol{\Sigma}$ are unit modulus,  $\mathbf{\Theta}'$ is unitary. However, this only happens when $\mathrm{rank}(\mathbf{X}^*) = 1$, while for higher-rank cases, we propose a randomization-based method to construct a $\boldsymbol{\Sigma}$ with unit modulus diagonal elements.

\begin{algorithm}[t]
	\caption{SDR-based passive beamforming and waveform optimization for fully-connected BD-RIS (SDR-BDRIS).} \label{alg:SDPRIS}
	\begin{algorithmic}[1]
            \State \textbf{Input:} $\upsilon$, $\beta$, $\mathbf{h}_{R,n}, \mathbf{h}_{I,n}, \forall n$ \quad \textbf{Output:} $s_n^{(l)}, \forall n$, $\mathbf{\Theta}^*$
            \State \textbf{Initialize:}  Choose $\mathbf{\Theta}^{(l)}$ randomly,  $f^\star =0$, $i_{dc} = \infty$
            \State Compute $\mathbf{P}$ and $h_n$ using \eqref{eq:permute_matrix} and \eqref{eq:hnreform2}
            \State Initialize $s_n^{(l)}, \forall n$ using \eqref{eq:SMF}
            \State Compute $d_n^{(l)} = \boldsymbol{\theta}^{(l)}\mathbf{D}_n{\boldsymbol{\theta}^{(l)}}^H$, where $\boldsymbol{\theta}^{(l)} = \mathbf{P}^{-1}\mathrm{Vec}(\mathbf{\Theta}^{(l)})$
            \Repeat\label{algBDRIS:line:alter_start}
                \State \hspace{-2mm} $\Omega  =\infty$, $i_{dc}^\star \leftarrow i_{dc}$
                \Repeat\label{algBDRIS:passstart}
                    \State \hspace{-2mm} $\Omega ^\star \leftarrow \Omega $, solve \eqref{SDP_problem_rank} to obtain $\mathbf{X}$
                    \State \hspace{-2mm} Compute $\boldsymbol{\theta}^{(l)}$ using Gaussian randomization
                    \State \hspace{-2mm} $d_n^{(l)} = \boldsymbol{\theta}^{(l)}\mathbf{D}_n{\boldsymbol{\theta}^{(l)}}^H$, $\mathbf{\Theta}^{(l)} = \mathrm{Vec}^{-1}(\mathbf{P}\boldsymbol{\theta}^{(l)})$
                    \State \hspace{-2mm} $l \leftarrow l + 1$
                \Until{$\norm{1 - {\Omega ^\star}/{\Omega }}\leq \upsilon$}\label{algBDRIS:passend}
                \State \hspace{-2mm} $h_n = \mathbf{h}_{R,n}^T\mathbf{\Theta}^{(l)}{\mathbf{h}_{I,n}}$ and run Algorithm~\ref{alg:waveformsca} to update $s_n^{(l)}$
                \State \hspace{-2mm} Compute $i_{dc}$ using \eqref{eq:parseval}
            \Until{$\norm{1 - i_{dc}^\star/ i_{dc}}\leq \upsilon$}\label{algBDRIS:line:alter_end}
            \State Run Algorithm~\ref{alg:PS_obtain} to obtain $\mathbf{\Theta}^*$
\end{algorithmic} 
\end{algorithm}

\begin{algorithm}[t]
	\caption{Randomization-based method for obtaining a feasible $\mathbf{\Theta}$.} \label{alg:PS_obtain}
	\begin{algorithmic}[1]
            \State \textbf{Input:} $\mathbf{\Theta}$, $K$, $\mathbf{h}_{R,n}, \mathbf{h}_{I,n}, s_n, \forall n$ \quad \textbf{Output:} $\mathbf{\Theta}^*$
             \State \textbf{Initialize:} 
             \State  Compute the SVD of $\mathbf{\Theta}$ to obtain $\mathbf{\Theta} = \mathbf{Q}\boldsymbol{\Sigma}\mathbf{Q}^T$, $i^\star_{dc} = 0$
            \For{$k = 1, \ldots, K$}
                \State \hspace{-2mm} Generate random $\phi_i \in [0, 2\pi], \forall i$ 
                \State \hspace{-2mm} Set $\boldsymbol{\phi} = [e^{j\phi_1}, \ldots, e^{j\phi_M}]^T$ and $\boldsymbol{\Sigma}' = \mathrm{diag}(\boldsymbol{\phi})$
                \State \hspace{-2mm} Compute $\mathbf{\Theta} = \mathbf{Q}\boldsymbol{\Sigma}'\mathbf{Q}^T$ and $i_{dc}$ using \eqref{eq:parseval}
                \If{$i_{dc} > i^\star_{dc}$}
                    \State $i^\star_{dc} \leftarrow i_{dc}$, $\mathbf{\Theta}^* \leftarrow \mathbf{\Theta}$
                \EndIf
            \EndFor
\end{algorithmic} 
\end{algorithm}

Algorithm~\ref{alg:SDPRIS} describes the proposed SDR-based method for passive beamforming and waveform optimization for fully-connected BD-RIS. First, the optimization variables are initialized. Then, the waveform and scattering matrix are optimized in an alternative fashion through lines \ref{algBDRIS:line:alter_start}-\ref{algBDRIS:line:alter_end}. Specifically, beamforming is done by solving \eqref{SDP_problem_rank} iteratively in lines \ref{algBDRIS:passstart}-\ref{algBDRIS:passend}, followed by iterative waveform optimization using Algorithm~\ref{alg:waveformsca}. After that, Algorithm~\ref{alg:PS_obtain} is utilized to construct a feasible solution $\mathbf{\Theta}^*$ based on the characteristics of the obtained solution $\mathbf{\Theta}$. First, the SVD of $\mathbf{\Theta}$ is computed to obtain a unitary matrix $\mathbf{Q}$. Then, random phase shifts are generated for $K$ iterations to construct new $\boldsymbol{\Sigma}'$ matrices and their corresponding feasible solution $\mathbf{\Theta}$. Finally, the constructed $\mathbf{\Theta}$ with the best $i_{dc}$ is selected as the final solution.



\subsection{Complexity Analysis}\label{subsec:companal}

Herein, we discuss the time complexity of the proposed algorithms in details:\\
     \textit{SCA-WF:} Algorithm~\ref{alg:waveformsca} requires solving a quadratic program \cite{boyd2004convex} in each SCA iteration. Notably, the complexity of quadratic programs scales with a polynomial function of the problem size, while the degree of the polynomial mainly depends on the type of solver. Let us consider a simple solver based on the Newton method, which has $\mathcal{O}(n^3)$ complexity \cite{boyd2004convex}, where $n$ is the problem size and the number of variables is $N$, leading to $n$ scaling with $N$. \\
     \textit{SDR-BDRIS:} The number of variables in \eqref{SDP_problem_rank} is ${\bar{M}(\bar{M} + 1)}/2$ with $\bar{M} = M(M + 1)/2$, while the rest of the entries in $\mathbf{X}$ are determined according to the Hermitian structure, and the sizes of these Hermitian matrix sub-space is ${\bar{M}}^2$. Additionally, the number of constraints scales with ${M}$ in \eqref{SDP_problem_rank}. It is shown that for a given accuracy, the complexity of SDP problems grows at most with $\mathcal{O}(n^{1/2})$, where $n$ is the problem size, scaling with the number of constraints and variables \cite{semidef-boyd}.

\section{Numerical Analysis}\label{sec:numerical}

In this section, we evaluate the system's performance in a WiFi-like scenario at a carrier frequency of $f_c = 2.4$~GHz. We consider the path loss due to large-scale fading at a distance $d$ to be $L_0d^{-\kappa}$, where  $L_0= 40$~dB is the path loss at a reference distance of 1~m and $\kappa = 2$ is the path loss exponent. Moreover, the incident and reflective path are both considered to be 2~m.

The transmit power is $P_T = 50$~dBm and the channels are modeled with quasi-static Rician fading given by
\begin{equation}
    \mathbf{h}_n = \sqrt{\kappa/(\kappa + 1)}\mathbf{h}_n^{\text{LoS}} + \sqrt{1/(\kappa + 1)}\mathbf{h}_n^{\text{NLoS}},
\end{equation}
where $\kappa$ is the Rician factor. The NLoS part is modeled with Rayleigh fading considering $L = 18$ delay taps with realizations following a circularly symmetric complex Gaussian distribution with a random power $p_l$, such that $\sum_{l = 1}^L p_l = 1$.

\begin{figure}[t]
    \centering
    \includegraphics[width=0.6\columnwidth]{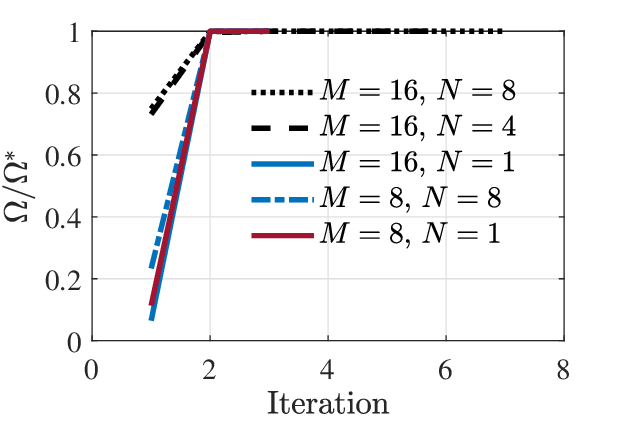}
    \caption{The convergence performance of SDR-BDRIS with a random channel realization and different $M$ and $N$.}
    \label{fig:convergence}
\end{figure}

Fig.~\ref{fig:convergence} illustrates the convergence performance of the proposed SDR-BDRIS. It is seen that SDR-BDRIS iteratively converges toward a local optimum solution. Note that in this figure, the number of required iterations for convergence in different setups is the same. However, the time complexity of solving the SDP problem in \eqref{SDP_problem_rank} drastically increases with $M$ since the hermitian matrix sub-space is in the order of $M^4$.

\begin{figure}[t]
    \centering
    \includegraphics[width=0.49\columnwidth]{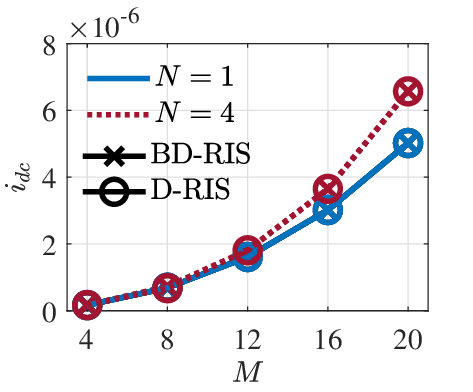}
    \includegraphics[width=0.49\columnwidth]{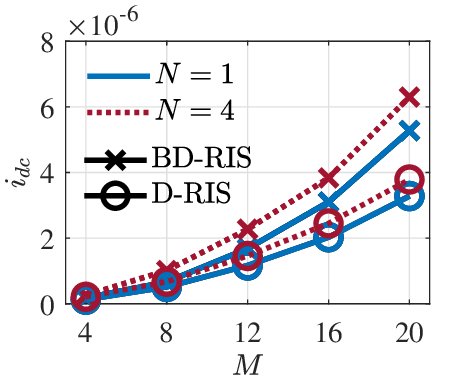} 
    \caption{The average $i_{dc}$ at the ER for (a) LoS channel (left) and (b) Rician channel with $\kappa = 0$~dB (right) as a function of $M$ for $N = 1$ and $N = 4$.}
    \label{fig:DBDoverM}
\end{figure}

\begin{figure}[t]
    \centering
    \includegraphics[width=0.49\columnwidth]{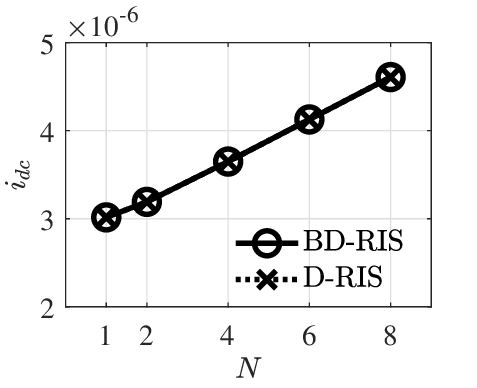} 
    \includegraphics[width=0.49\columnwidth]{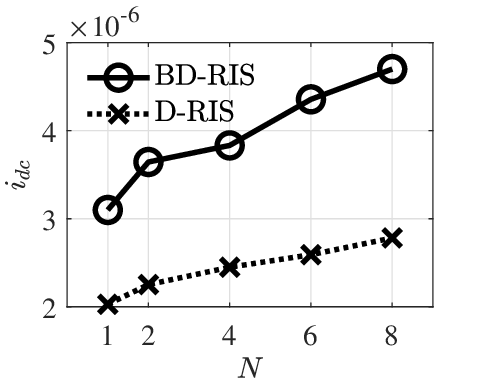}
    \caption{The average $i_{dc}$ at the ER for (a) LoS channel (left) and (b) Rician channel with $\kappa = 0$~dB (right) as a function of $N$ for $M = 16$.}
    \label{fig:DBDoverN}
\end{figure}

Fig.~\ref{fig:DBDoverM} shows the average $i_{dc}$ as a function of $M$ for D-RIS and BD-RIS. The D-RIS results are obtained using the approach proposed in \cite{clerckxRISWPT}. As expected, it is observed that increasing the number of elements increases $i_{dc}$. Moreover, Fig.~\ref{fig:DBDoverM}.a illustrates that BD-RIS achieves the same performance as D-RIS for both single-carrier and multi-carrier systems in pure LoS conditions. However, when $\kappa$ decreases and the channel tends to become frequency-selective, as in Fig.~\ref{fig:DBDoverM}.b, BD-RIS can leverage the extra degrees of freedom to impact different components of the channel effectively such that $i_{dc}$ becomes higher compared to D-RIS. 

Fig.~\ref{fig:DBDoverN} presents the average $i_{dc}$ as a function of $N$ for D-RIS and BD-RIS. It is shown that for LoS channels, BD-RIS and D-RIS achieve the same performance for any number of sub-carriers, while for the Rician channel with NLoS components, the BD-RIS becomes the favorable option and the performance gap increases with $N$. Moreover, it is seen that increasing the number of sub-carriers improves the performance.

\section{Conclusion \& Future Work}\label{sec:conclude}

In this paper, we considered a BD-RIS-aided SISO WPT system with EH non-linearity. Moreover, we formulated a joint beamforming and waveform optimization problem to maximize the harvested power at the ER. We proposed the SDR-BDRIS approach relying on alternating optimization, SCA, and SDR to solve the problem. The simulation results proved that BD-RIS achieves the same performance as D-RIS under far-field LoS conditions (in the absence of mutual coupling), while it outperforms D-RIS in NLoS cases. As expected, our findings demonstrated that increasing the number of elements or sub-carriers improves performance.



\bibliographystyle{ieeetr}
\bibliography{ref_abbv}

\end{document}